\documentclass[bimj,fleqn]{w-art}
\usepackage{times}
\usepackage{w-thm}
\usepackage[authoryear]{natbib}
\theoremstyle{plain}

\theoremstyle{definition}

\usepackage[]{graphicx}
\chardef\bslash=`\\ 

\usepackage{graphicx}
\usepackage{multirow}
\usepackage{booktabs}
\usepackage{endfloat}

\usepackage{calc}
\usepackage{lipsum}

\usepackage{psfrag}
\usepackage{epsfig}
\usepackage{url}
\usepackage{setspace}
\usepackage{amsmath}
\usepackage{amsfonts, amssymb}

\usepackage{tikz} 

\usepackage{algorithm}
\usepackage{algpseudocode}

\usepackage{xr} 
\usepackage{hyperref}
\usepackage{pifont}
\usepackage{xcolor}
\usepackage{mathtools}

\makeatletter
\newcommand*{\addFileDependency}[1]{%
  \typeout{(#1)}%
  \@addtofilelist{#1}%
  \IfFileExists{#1}{}{\typeout{No file #1.}}
}
\makeatother

\newcommand*{\myexternaldocument}[1]{%
  \externaldocument{#1}%
  \addFileDependency{#1.tex}%
  \addFileDependency{#1.aux}%
}

\myexternaldocument{supp}

\newcommand{\cross}{\ding{53}}%
\newcommand{\rstar}{\ding{84}}%
\newcommand{\tri}{\ding{115}}%
\definecolor{amber}{rgb}{1.0, 0.75, 0.0}
\newtheorem{lemma2}{Lemma}
\newtheorem{remark2}{Remark}
\DeclareMathOperator*{\argmax}{arg\,max}

\begin{document}
\DOIsuffix{bimj.200100000}
\Volume{52}
\Issue{61}
\Year{2024}
\pagespan{1}{}
\keywords{Copula; False Discovery Rate; Multiple Testing; Two-Stage Procedure; }

\title[Two-Stage Multiple Testing with auxiliary variable]{Two-Stage Multiple Test Procedures Controlling False Discovery Rate with auxiliary variable and their Application to Set4$\Delta$ Mutant Data}

\author[Hwang {\it{et al.}}]{Seohwa Hwang\inst{1}}
\author[Ramos]{Mark Louie Ramos\inst{2}}
\author[Park]{DoHwan Park\inst{3}}
\author[Park]{Junyong Park\footnote{ Corresponding author: {\sf{e-mail: junyongpark@snu.ac.kr}}}\inst{1}} 
\author[Lim]{Johan Lim\inst{1}} 
\author[Green]{Erin Green\inst{4}}

\address[\inst{1}]{Department of Statistics, Seoul National University, Seoul, Korea}
\address[\inst{2}]{Department of Health Policy and Administration, The Pennsylvania State University, PA, USA}
\address[\inst{3}]{Department of Mathematics and Statistics, University of Maryland Baltimore County, Baltimore, MD, USA}
\address[\inst{4}]{Department of Biological Sciences, University of Maryland Baltimore County, Baltimore, MD, USA}

\Receiveddate{zzz} \Reviseddate{zzz} \Accepteddate{zzz} 

\begin{abstract} 
In this paper, we present novel methodologies that incorporate auxiliary variables for multiple hypotheses testing related to the main point of interest while effectively controlling the false discovery rate. When dealing with multiple tests concerning the primary variable of interest, researchers can use auxiliary variables to set preconditions for the significance of primary variables, thereby enhancing test efficacy. Depending on the auxiliary variable's role, we propose two approaches: one terminates testing of the primary variable if it does not meet predefined conditions, and the other adjusts the evaluation criteria based on the auxiliary variable. Employing the copula method, we elucidate the dependence between the auxiliary and primary variables by deriving their joint distribution from individual marginal distributions.Our numerical studies, compared with existing methods, demonstrate that the proposed methodologies effectively control the FDR and yield greater statistical power than previous approaches solely based on the primary variable. As an illustrative example, we apply our methods to the Set4$\Delta$ mutant dataset. Our findings highlight the distinctions between our methodologies and traditional approaches, emphasising the potential advantages of our methods in introducing the auxiliary variable for selecting more genes.
\end{abstract}

\maketitle                   






\section{Introduction}
In this paper, we consider multiple testing scenarios where additional information beyond p-values is available. Numerous studies have already tackled this subject, predominantly focusing on enhancing rejection rates through the utilisation of covariates. For instance, \cite{Lei2018} demonstrated that regardless of covariate values, $p$-values under the null hypothesis conform to a uniform distribution. In contrast, AdaFDR, proposed by \cite{Zhang2019}, relaxes the requirement of a uniform distribution under the null hypothesis but mandates a mirroring property in the distribution of $p$-values given specific covariates, which means the conditional null distribution of $p$-value given the covariate is symmetric around $1/2$. \cite{Ignatiadis} explore maximising rejections by assigning varying weights to groups based on covariates, while \cite{Boca-Leek} estimate the proportion of alternative hypotheses given covariates to adjust the rejection region. 
On the other hand, \cite{Cao2022} proposed a multiple testing procedure when there is auxiliary information such as the order of hypotheses or alternative distribution while there is no specific covariate. These methodologies aim to optimise rejection rates while ensuring control over the False Discovery Rate (FDR) across all hypothesis tests. While maximising the number of rejections with controlling an error rate is statistically advantageous, there are instances where such strategies may not be desirable when considering scientific knowledge or background.
 We incorporate the auxiliary variable more relevantly through modelling the joint distribution of the $p$-value and the auxiliary variable so that we do not require such strong assumptions 
 such as uniform distribution of $p$-value under the null hypothesis regardless of the covariate or mirroring property of $p$-value under the null hypothesis.

One main aim is to incorporate an auxiliary variable to improve the results of multiple testing compared to the case of using only the primary variable. In addition, scientists may reflect some prior knowledge on the main interest through the auxiliary variable. Some restrictions on the distribution of $p$ value given the value of the covariate may not reflect such flexible situations. The reflection of the auxiliary variable in the test of the primary variable essentially signifies that there is a dependency between the two variables, and identifying this dependency is crucial. We emphasise that only the primary variables are tested and the auxiliary variables provide supplementary information to make better decisions on the primary variables. In this sense, our problem is different from the case of bivariate $p$-values which are declared to be significant when at least one of $p$-values is significant. See \cite{du2014single}.

Rather than imposing the restrictions on the distribution of the $p$ value, we use the joint distribution of the auxiliary and primary variables and then propose two types of two-stage procedures controlling FDR. One is called two-stage FDR(H) based on the idea of hard thresholding such as we filter insignificant genes using the auxiliary variable and then only if this stage is passed, the significance of the primary variable is considered. The other one is two-stage FDR(S) where (S) presents the soft threshold which avoids filtering. Two-stage FDR(S) is testing of the primary variable with adjusted rejection region for the primary variable reflecting the result for the auxiliary variable. As mentioned, our goal is to test only the primary variable while reflecting the information coming from the dependence between the primary and auxiliary variables. The proposed methods are demonstrated through simulation to effectively control the FDR while achieving superior testing power compared to existing methods. Existing methods incorporating the auxiliary variable with the restrictions on the distribution of $p$ value have difficulty in controlling a given level of FDR and obtaining powers due to the constraints such as the uniform distribution of $p$ value regardless of the value of covariate and the mirroring property under the null hypothesis. In the context of Set4 $\Delta$ dataset, the two-stage FDR(S) approach successfully identifies a cluster of genes that not only possess statistical significance but are also known as significant from biological experiments.

This paper is organised as follows. In section 2, we present the description of Set4$\Delta$ mutant data, which serves as the motivation for this study, along with the necessity of the two-step method proposed. Section 3 introduces the copula that explains distribution estimation and dependence, which are the basic items required for the proposed methods, and explains the joint distribution estimation procedure. Section 4 describes two methodologies based on the two-step procedures designed for controlling FDR. Section 5 provides simulation studies to compare the existing methods with the proposed two-stage procedures and section 6 presents an analysis of Set4$\Delta$ mutant data. Finally, Section 7 contains concluding remarks.

\section{Motivating Data set: Set4$\Delta$ mutant data}
\label{sec:data}
The motivating study seeks to identify genes within the genome of the budding yeast {\it Saccharomyces cerevisiae} that are differentially expressed when a particular gene expression machinery referred to as Set4 is mutated. The data set is referred to as the Set4$\Delta$ mutant data set. In order to identify the genes that depend on Set4 during stress, the gene expression profile of the whole genome in cells where Set4 is silenced (knock-out condition or KO) and cells where it is not (wildtype condition or WT) are determined. In addition, cells are treated with hydrogen peroxide to induce oxidative stress. The experiment is replicated thrice. In each replication, the amount of mutation, represented by a weighted count value, was collected for each of about 8,000 RNA species under KO and WT conditions. For each gene, these weighted counts are averaged to produce a single value under each condition.
More formally, we have the data for $M$ genes and three replicates such as
\begin{eqnarray}
WT_{ij}, \quad KO_{ij}
\end{eqnarray}
under WT and KO conditions, respectively for
$1\leq i \leq M$ and $1\leq j \leq 3$.
Let $\mu_{WT_i}$ and $\mu_{KO_i}$
be the mean of $W_{ij}$ and $K_{ij}$, respectively, then we define the vector of
logfold changes of $M$ genes as
\begin{eqnarray}
\beta' = (\beta_1,\ldots, \beta_M)'
= \left(\log \frac{\mu_{WT_1}}{\mu_{KO_1}},\ldots,
\log \frac{\mu_{WT_M}}{\mu_{KO_M}} \right)'.
\end{eqnarray}
From these data matrices,
the vector of statistics that is to be used for multiple testing is
\begin{equation}
\hat \beta'=
(\hat \beta_1,\ldots, \hat \beta_M)' =
\left(\log{\frac{\bar{WT_{1.}}}{\bar{KO_{1.}}}},
\cdots , \log{\frac{\bar{WT_{M.}}}{\bar{KO_{M.}}}}\right)'
\label{eqn
}
\end{equation}
where $\bar{WT_{i.}} = \frac{1}{3}\sum_{j=1}^3 W_{ij}$ and $\bar{KO_{i.}} = \frac{1}{3}\sum_{j=1}^3 KO_{ij}$
are the sample means from 3 replicates for the $i$th gene.
The logfold changes statistic is then computed as
\begin{eqnarray}
\widehat{\beta_i}=\log_2\frac{\bar{KO}i}{\bar{WT}}{i}
\end{eqnarray}
where $i=1,2, \cdots M$ and $M$ is the total number of genes. $\bar{WT}$ is a mean weighted count measure of mutations under wild type condition and $\bar{KO}$ is a mean weighted count measure of mutations under knock-out condition for the $i$th gene.
This statistic $\hat \beta_i$ represents the doubling effect'' of one condition relative to the other condition. A gene is considered interesting'' if the weighted count measure is very different between the KO and WT settings, and so correspondingly, if
$\mid \hat \beta_i \mid\gg0$,
we have supporting evidence for significance of the gene.
\cite{ramos2021}
selected genes based on
$\hat \beta_i$s
considering multiple testing procedures controlling FDR.

While logfold change is our primary focus,
\cite{ramos2021} presented the results that
the genes selected based solely on logfold changes are not
biologically significant
in most cases. Even if logfold change is considered significant for various reasons, such as extremely small values or high variance in the data across conditions, it can be influenced by various errors occurring during the experimental process. Therefore, when considering variability in experimental replicates, it is essential to first assess the significance of changes in logfold changes values. In our analysis, the standard deviation of the logfold changes will serve as an auxiliary variable. {The distributions of logfold changes and their standard deviations are shown in Figure \ref{fig:hist_margin}.
\begin{figure}[ht]
    \centering
    \includegraphics[width = \textwidth]{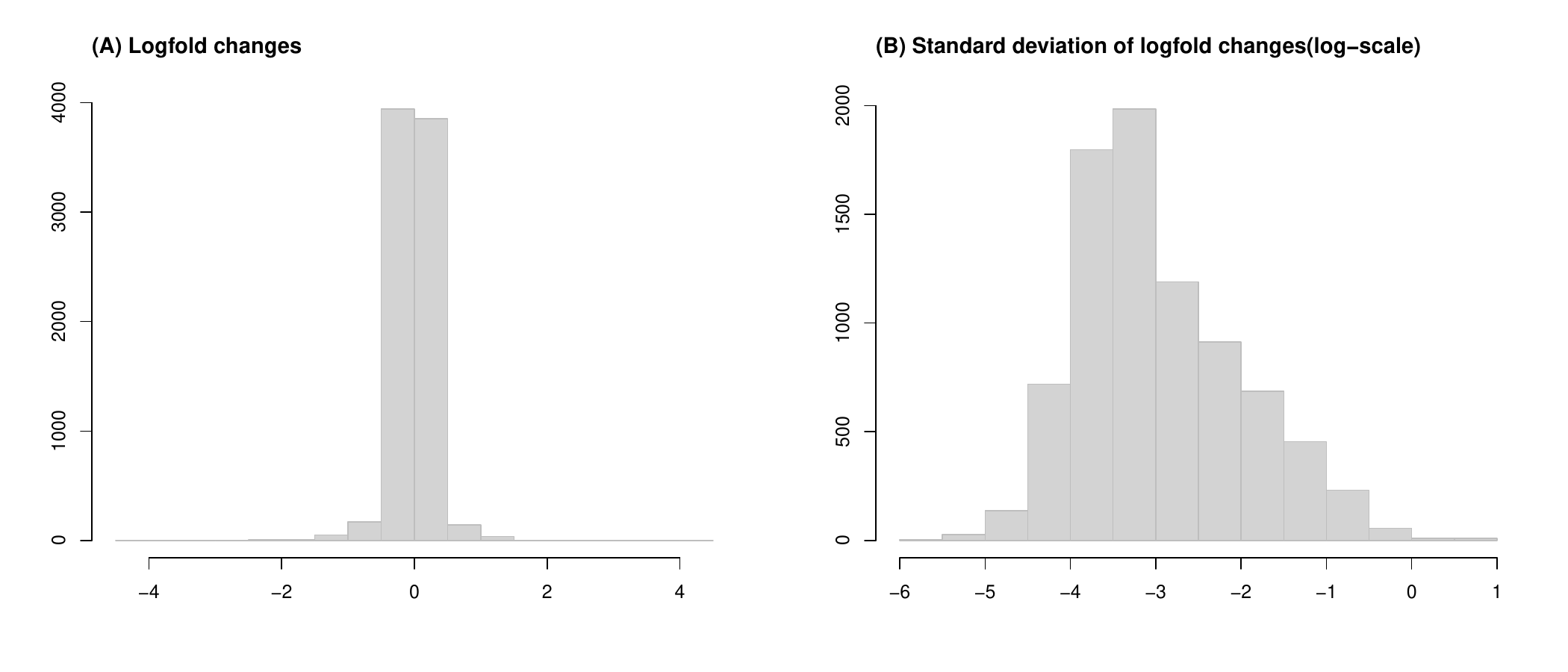}
    \caption{The histograms show that the distributions of (A) the logfold changes and (b) their estimated standard deviations.} 
    \label{fig:hist_margin}
\end{figure}

While the exact distribution of each logfold change remains unknown, it's reasonable to assume that
different sizes of standard deviation of logfold changes
affect the determination of the significance of the logfold change.
Based on this idea, we propose two methods for setting rejection criteria based on standard deviation. The first method adjusts rejection regions according to whether the standard deviation exceeds a certain threshold, similar to a diagnostic screening process. The second method varies the rejection regions with the exact value of the standard deviation, allowing precise adjustments when detailed data on both the primary and auxiliary variables are available.

\section{Models and Estimation of Distributions}
{The traditional multiple testing methods with existing covariate either assume a uniform distribution under the null hypothesis regardless of the covariates or satisfy the mirroring property, where the role of the covariate is to maximise the rejection of hypotheses by adjusting rejection regions for different values of the covariate. More specifically, in the existing studies such as \cite{Lei2018}, \cite{Zhang2019}, \cite{Ignatiadis} and \cite{Boca-Leek}, the null density of $p$ value given the auxiliary variable $y$ is either $I_{[0,1]}(p)$ regardless of $y$ or the mirroring property such as $P( p \leq t |y ) =P( p \geq 1-t|y)$ for $t \in [0,1/2]$. 
\begin{figure}
    \centering
    \includegraphics[width=\linewidth]{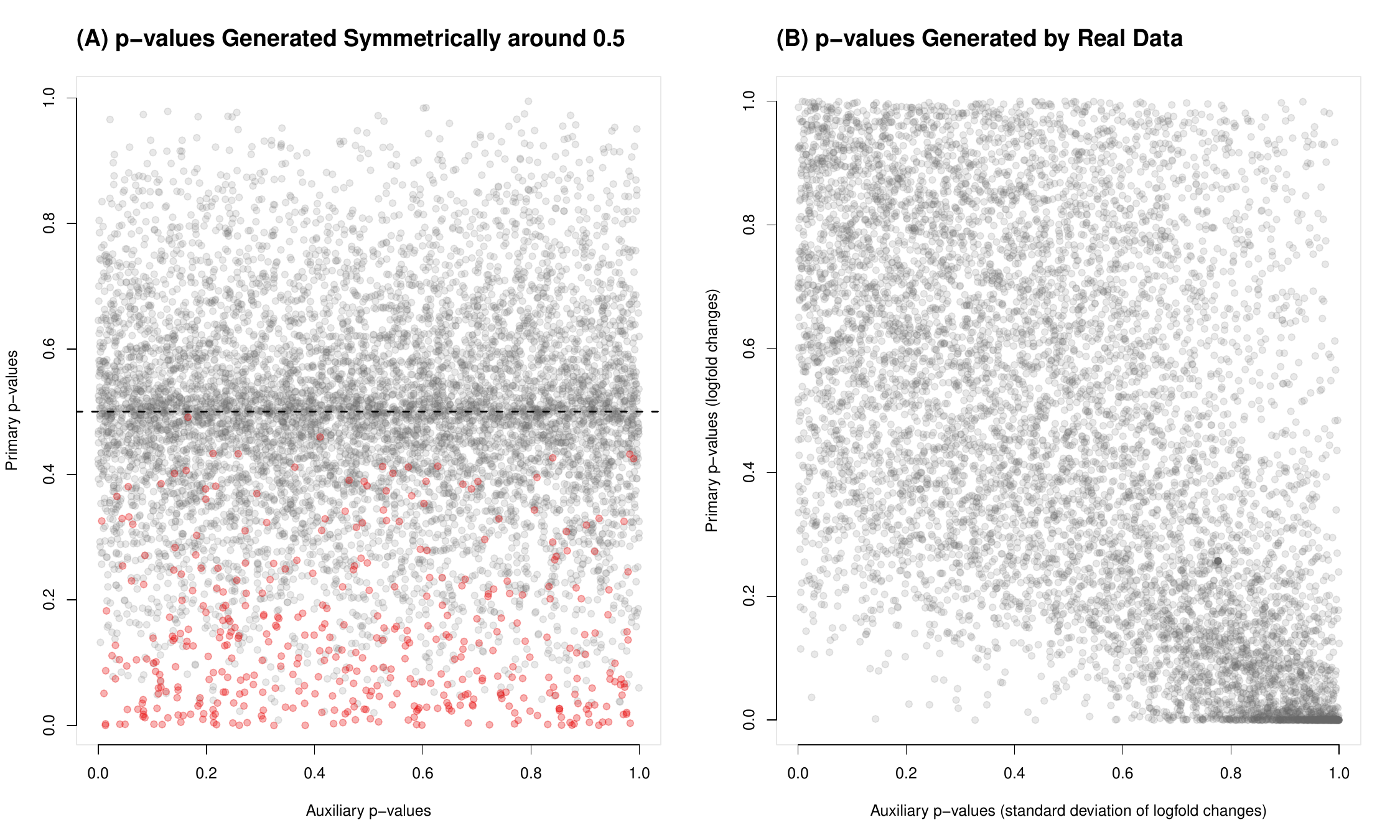}
    \caption{Comparison of $p$-value distributions: (A) $p$-values generated symmetrically around 0.5, and (B) $p$-values generated from Set4$\Delta$ data }
    \label{fig:mirror}
\end{figure}

{It's worth noting that the mirroring property fails in some cases. This property implies a zero correlation between the $p$ value and the auxiliary variable, often indicating a weak relationship between them. Achieving this constraint can be overly restrictive if the auxiliary variable is to be used to increase the power of the test. For example, we generated exemplary $p$-values that are symmetric around 0.5, as shown in Figure \ref{fig:mirror}(A). This distribution of $p$-values is suitable for existing statistical methods that assume such symmetry in their analyses. Conversely, $p$-values generated from real data typically do not exhibit this symmetry around $0.5$, which implies that the underlying assumptions of existing methods are not met.}

Rather than those constraints on the null distribution of $p$ value given the auxiliary variable, our proposed methods are based on considering the joint distribution of the primary variable of interest and an additional auxiliary variable. We estimate the marginal distribution for each variable and derive the joint distribution using the copula method, based on which we propose two methods. In other words, we treat the auxiliary variable as a random variable rather than covariate so that we can consider dependent structure between those two variables. Our proposed methods do not request that the marginal distribution of $p$ value is uniformly distributed or has mirroring property. We use the relationship between the distribution of $p$ value and the auxiliary variable through the joint distribution of them. Therefore, our proposed methods reflect those two variables more flexibly compared to existing methods. We will demonstrate that these proposed methods yield different results compared to the existing methods.}

Suppose we have primary and auxiliary variables denoted by $\beta$ and $y$ respectively. The primary variable, $\beta$, is modelled as a mixture of two probability density functions, with each being assumed to have been generated under null and alternative scenarios, labelled as $f_0$ and $f_1$, respectively. Thus, it is of interest to test $M$ hypotheses of the primary variables,
\begin{eqnarray}
H_{0i} : \beta_i \sim f_0~~ vs.~~H_{1i}: \beta_i \sim f_1
\label{eqn:hypothesis}
\end{eqnarray}
for $1\leq i \leq M$ where each $\beta_i$ is modelled as
\begin{eqnarray}
\label{eqn}
f(\beta) &=&p_0 f_0 (\beta)+ (1 - p_0)f_1 (\beta)
\end{eqnarray}
where $p_0=P(\mbox{$H_{0i}$ is true})$.

This setting has been used in a variety of multiple testing problems. For example, \cite{efron2002empirical} used a normal distribution for $f_0$ of which the mean and standard deviation are estimated based on the zero assumption and the marginal distribution $f$ is estimated using Poisson regression instead of estimating the alternative density $f_1$. In \cite{ramos2021}, the logfold changes statistics for Set4$\Delta$ that come from $f_0$ were modelled as a mixture of two normal distributions rather than a single normal distribution to explain two different sources of the null distribution. Once we estimate $f_0$, we also obtain the cumulative distribution function (cdf), $\hat F_0(\beta) = \int_{-\infty}^\beta \hat f_0(t)dt$.

The use of an auxiliary variable ($y$) is also considered. Since the auxiliary variable is used as supplementary information, we only need to identify the marginal distribution of $y$ such as
\begin{eqnarray}
y \sim h(y)
\end{eqnarray}
where $h(y)$ is the probability density function (pdf) of $y$. In practice, $h(y)$ or the cumulative distribution function $H(y)=\int^{y}_{-\infty} h(t)dt$ can be estimated by nonparametric approaches, for example, we estimate
\begin{eqnarray}
\hat H(y) = \frac{1}{M} \sum_{i=1}^M \mathbf{I}(y_i \leq y)
\end{eqnarray}
which is the empirical cdf of $y_i$s.

This method helps researchers by allowing them to use extra scientific information that isn't the main focus of their study, but still makes their results stronger and more trustworthy. By estimating $\hat H(y)$, the empirical cumulative distribution function of $y$, we apply a non-parametric approach without making restrictive assumptions about its distribution.

Once we have the marginal distributions of the primary and auxiliary variables, we use copula models to estimate the dependence structure of $y$ and $\beta$ of which the joint distribution is $G_0(y,\beta)$. 
The selection of an appropriate copula is a crucial step in modelling the dependence structure in multivariate distributions. There are several approaches to choose an appropriate copula. One ad-hoc approach is based on a scatterplot of the data to gain insight into the dependence structure which can be done by converting the data using the data's marginal cumulative distribution. There are typical patterns of scatterplots for different copulas, so we may choose the closest pattern of the scatterplot. Another approach is comparing Log-Likelihood (LogLik), Akaike Information Criterion (AIC), and Bayesian Information Criterion (BIC). These criteria are widely used for the selection of copulas. For example, see \cite{stober2014regime} and \cite{fink2017regime}. We present examples of copulas in Figure \ref{fig:copula_example} 
and numerical studies of the performances of these criteria for selection of  copula
 in Table \ref{tbl:copula_test} in supplementary material.

In this section, we present our proposed methods depending on how we handle the auxiliary variable. We first define
$\delta_i =
I( {H_{0i} ~\mbox{is rejected}} )
$
and
$\theta_i = {I}(H_{1i} ~\mbox{is true})$
where $I(\cdot)$ is an indicator function. Our proposed methods are designed to control the FDR which is
\begin{eqnarray}
FDR = E \left( \frac{ V}{R \vee 1} \right)
\label{eqn:FDR}
\end{eqnarray}
where $V = \sum_{i=1}^M \delta_i (1-\theta_i)$
is the number of false rejections and $R=\sum_{i=1}^M \delta_i$
is the number of rejections.
We introduce the marginal $p$ values and then propose two types of methods with some theoretical justification.
 
\subsection{Calculation of Marginal $p$-values}
We define $p_{1i}$ and $p_{2i}$
which are the marginal $p$-values corresponding to
the auxiliary variable and the primary variable, respectively.
For observed $y_i$ and $\hat \beta_i$,
\begin{eqnarray}
 p_{1i} &=&  H(y_i), \\
 p_{2i} &=& 2 \min \left( F_0(\hat \beta_i), 1-F_0(\hat \beta_i)\right).  \label{eqn:p2i}
\end{eqnarray}
We mentioned how $f_0$ is estimated in the previous section.
Based on the estimator of $f_0$, we have an estimator of
$F_0(\beta) =\int_{-\infty}^\beta f_0(t) dt$.
We also have an estimator of
$H(y)$ such as the empirical cdf as discussed in the previous section.
We assume that $p_{1i} \sim U[0,1] $ and
$p_{2i} \sim U[0,1]$ under the $H_{0i} : \beta_i \sim f_0$.
Since $(y_i, \hat \beta_i)$ are generated from the joint distribution $G_0$, $p_{1i}$ and $p_{2i}$ are dependent.

\begin{remark2}
Note that
$p_{2i}$ covers the case of two-tailed $p$-value for
non-symmetric distribution.
For one-sided cases, we can define $F_0(\hat \beta_i)$
and $1-F_0(\hat \beta_i)$
for left-tailed and right-tailed cases, respectively.
Similarly, $p_{1i}$ can also be defined for two-tailed cases.
\end{remark2}
Depending on the marginal $p$-values such as $p_{1i}$ and $p_{2i}$, we
propose two methods, namely two-stage FDR(H) and
two-stage FDR(S) in the following section.

\subsection{Two-stage FDR(H)and Two-stage FDR(S)}
As mentioned before,
our proposed methods are based on modelling
the joint distribution of $p_{1i}$ and $p_{2i}$
rather than assuming $p_{1i}$ is $U(0,1)$ regardless of
the auxiliary variable $p_{2i}$
or the mirroring property such as the density of $p_{1i}$
is given $x$ under the null hypothesis is symmetric around $1/2$. We present two proposed methods, called
Two-Stage FDR(H) and Two-Stage FDR(S).

\textbf{Two-stage FDR(H):}
For the first type of two-stage procedure two-stage FDR(H), we proceed as follows:
Given a value of $\gamma_1$, if $p_{1i} > \gamma_1$
equivalently $y_i > H^{-1} (\gamma_{1})$, we conclude that
the first requirement via the auxiliary variable is not satisfied
and hence we do not proceed with testing the significance of the primary variable.
On the other hand, if $p_{1i} \leq \gamma_1$ equivalently $y_i \leq H^{-1} (\gamma_{1})$,
then we test the significance of the primary variable.
Here, $\gamma_1$ is a tuning parameter that serves as a cut-off value to determine whether to proceed with testing
the primary variable.
We will discuss in section \ref{subsec:fdr} how we choose $\gamma_1$ to maximise the number of discoveries.

Considering this two-stage procedure,
we define the $p$-value, $p_i$ aggregating
two $p_{1i}$ and $p_{2i}$ as follows:
for a given $\gamma_1 \in [0,1]$, 
\begin{eqnarray}
     p_i^H(\gamma_1) =
  \begin{cases}
         C(\gamma_1, p_{2i})   & \quad \text{if } p_{1i} \leq \gamma_1 \\
      p_{1i}     & \quad \text{if } p_{1i}>\gamma_1
  \end{cases}
  \label{eqn:p_def1}
\end{eqnarray}
where 
\begin{eqnarray}
C(a,b) = P_{G_0} (P_{1i} \leq a, P_{2i}\leq b)
\label{eqn:C_ab}
\end{eqnarray}
is a 2-dimensional copula distribution evaluated at $(a,b)$. Here, $G_0$ stands for the joint distribution of $y_i's$ and $\beta_i's$ under null hypothesis.
In order to reject the null hypothesis for the primary variable $\beta_i$,
we need two steps such as $p_{1i} \leq \gamma_1$ and a small $p_{2i}$, i.e.,
the event of ${p_{1i} \leq \gamma_1, p_{2i} \leq \gamma_2 }$
for some $\gamma_2$. In testing the primary variable, it is better not to use
$\gamma_2 =\alpha$ since $p_{1i} \leq \gamma_1$ affects the test for the primary variable.
More specifically, take $\gamma_2$ which should satisfy
the uniform distribution of $P_{i}$ such that
$P_{G_0}(P_{1i} \leq \gamma) =P_{G_0}( P_{1i} \leq \gamma_1,
P_{2i} \leq \gamma_2) =\gamma$.
This means $\gamma_2$ depends on $\gamma_1$ and $\gamma$, 
say $\gamma_2 \equiv \gamma_2(\gamma_1,\gamma)$. 
Using the definition of \eqref{eqn:C_ab}, 
if $ \hat \beta_i \geq  D_{0.5}$, then we have 
\begin{eqnarray}
C(\gamma_1, \gamma_2)= P_{ G_0(y,\beta)  }(P_{1i} \leq \gamma_1, P_{2i} \leq \gamma_2) = \int^{C_{1-\gamma_1}}_0 \int_{D_{1-\gamma_2/2}}^{1} G_0(y, \beta) d\beta dy 
= \gamma
\label{eqn:gamma_def}
\end{eqnarray}
where  $C_{\alpha}$ and $D_{\alpha}$ are $\alpha$ quantiles of 
$f_0$ and $h_0$, respectively.; 
if $\hat \beta_i < D_{0.5}  $ 
\begin{eqnarray}
C(\gamma_1, \gamma_2)=P_{ G_0(y,\beta)  }(P_{1i} \leq \gamma_1, P_{2i} \leq \gamma_2) = \int^{C_{1-\gamma_1}}_0 \int^{D_{\gamma_2/2}}_{0} G_0(y, \beta) d\beta dy 
= \gamma
\end{eqnarray}
equivalently, we can simply define $\gamma_2$ as the value which satisfies
\begin{eqnarray}
    C(\gamma_1, \gamma_2) = \gamma
    \label{eqn:gamma_def2}
\end{eqnarray}

This method determines whether to proceed to the next step based on the value of 
$p_{1i}$ in the first step, and if  $p_{1i} >\gamma_1$,  
the corresponding primary variable $y_i$ 
as a whole is not considered significant although the second p-value, $p_{2i}$, is very small.  
In other words, such a thresholding in the first step plays a role of screening genes which do not satisfy the condition based on the auxiliary variable $y_i$.
To highlight this characteristic, we will refer to it as ``two-stage FDR(H)", denoting the implementation of two-stage FDR with hard thresholding.

From two-stage FDR(H), 
it is crucial to check whether $p_i$ has 
a desirable property such that 
$p_i^H$ has the uniform distribution under $G_0$. 
Before we state Theorem \ref{thm:typeH} showing the uniform distribution of $p_i$ in both \eqref{eqn:p_def1} and  
\eqref{eqn:p_def2} under $G_0$, 
we present the following lemma which is used in the proof of 
Theorem \ref{thm:typeH}. 

\begin{lemma2}  
From the definition of $p_i^H(\gamma_1)$ 
in \eqref{eqn:p_def1}, we have 
\begin{eqnarray}
P(P_{1i} \le \gamma_1, P_{2i} \le \gamma_2) = P(P_i^H(\gamma_1)\le \gamma, P_{1i} \le \gamma_1)
\label{eqn:claim}
\end{eqnarray}
for $\gamma=C(\gamma_1,\gamma_2)$ in 
\eqref{eqn:gamma_def2} 
where $\gamma_1$ and $\gamma_2$ are in $[0,1]$.
\label{lemma:pi_1}
\end{lemma2}
\begin{proof}
See the supplementary material. \end{proof}
Using Lemma \ref{lemma:pi_1}, 
we present the following theorem showing 
that $p_i^H(\gamma_1)$ is uniformly distributed regardless of the choice of $\gamma_1$. 

\begin{theorem}  
If $(\hat \beta_i, y_i) \sim G_0(\beta, y)$,  
$p_i^H(\gamma_1)$'s defined in (\ref{eqn:p_def1}) follow the uniform distribution in $(0,1)$.  
\label{thm:typeH}
\end{theorem}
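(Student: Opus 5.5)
The plan is to compute $P(P_i^H(\gamma_1)\le \gamma)$ directly for every $\gamma\in[0,1]$ and show that it equals $\gamma$. I will decompose on the first-stage event:
\begin{eqnarray*}
P(P_i^H(\gamma_1)\le\gamma) = P(P_i^H(\gamma_1)\le\gamma,\, P_{1i}\le\gamma_1) + P(P_{1i}\le\gamma,\, P_{1i}>\gamma_1).
\end{eqnarray*}
The second term uses that $P_i^H=P_{1i}$ on $\{P_{1i}>\gamma_1\}$. Two facts about the copula in \eqref{eqn:C_ab} drive everything. First, for fixed $\gamma_1$ the map $b\mapsto C(\gamma_1,b)$ is nondecreasing and continuous, with $C(\gamma_1,0)=0$ and $C(\gamma_1,1)=P(P_{1i}\le\gamma_1)=\gamma_1$. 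Second, consequently $C(\gamma_1,P_{2i})\le\gamma_1$ always. Both follow from the uniform marginals of $P_{1i}$ and $P_{2i}$ under $G_0$.

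\textbf{Case $\gamma\le\gamma_1$.} Here the second term vanishes, since $P_{1i}\le\gamma\le\gamma_1$ contradicts $P_{1i}>\gamma_1$. By the intermediate value property there is a $\gamma_2\in[0,1]$ with $C(\gamma_1,\gamma_2)=\gamma$, which is exactly \eqref{eqn:gamma_def2}. Lemma \ref{lemma:pi_1} then gives
\begin{eqnarray*}
P(P_i^H(\gamma_1)\le\gamma,\, P_{1i}\le\gamma_1) = P(P_{1i}\le\gamma_1,\, P_{2i}\le\gamma_2) = C(\gamma_1,\gamma_2) = \gamma.
\end{eqnarray*}

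\textbf{Case $\gamma>\gamma_1$.} Since $C(\gamma_1,P_{2i})\le\gamma_1<\gamma$, the first term equals $P(P_{1i}\le\gamma_1)=\gamma_1$. The second term is $P(\gamma_1<P_{1i}\le\gamma)=\gamma-\gamma_1$ by uniformity of $P_{1i}$. The two terms sum to $\gamma$.

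Together the two cases show that $P_i^H(\gamma_1)$ has cdf $\gamma$ on $[0,1]$ for every $\gamma_1$, which is the claim of Theorem \ref{thm:typeH}.

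The main obstacle is hidden inside the use of Lemma \ref{lemma:pi_1}. The event $\{C(\gamma_1,P_{2i})\le\gamma\}$ must coincide with $\{P_{2i}\le\gamma_2\}$ on $\{P_{1i}\le\gamma_1\}$ up to a null set, and this can fail when $C(\gamma_1,\cdot)$ is not strictly increasing. I would handle this by taking $\gamma_2$ to be the largest solution, $\gamma_2=\sup\{b: C(\gamma_1,b)\le\gamma\}$. If $C(\gamma_1,\cdot)$ is flat on an interval $(b,b']$, then $P(P_{1i}\le\gamma_1,\, b<P_{2i}\le b')=0$. Hence any discrepancy between the two events lies in a set of probability zero under $G_0$, and the lemma applies.
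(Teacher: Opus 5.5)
Your proof is correct and follows the paper's argument essentially step for step. It uses the same split on $\{P_{1i}\le\gamma_1\}$ versus $\{P_{1i}>\gamma_1\}$, the same case $\gamma\le\gamma_1$ handled through Lemma 1, and the same case $\gamma>\gamma_1$ handled through $C(\gamma_1,P_{2i})\le C(\gamma_1,1)=\gamma_1$. Your two additions are useful refinements the paper omits: you justify the existence of $\gamma_2$ by continuity, and you take the largest solution so the event identity survives flat pieces of $C(\gamma_1,\cdot)$, whereas the paper's proof of Lemma 1 tacitly assumes $C(\gamma_1,\cdot)$ is strictly increasing.
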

\begin{proof}
See the  supplementary material. \end{proof}
\textbf{Two-stage FDR(S):}
Instead of this stringent thresholding procedure in two-stage FDR(H), 
we also provide the following method, 
two-stage FDR(S). 
The basic idea is to adjust the distribution of the second $p_{2i}$ based on the given value of the first $p_{1i}$, by using the estimated copula. 
In fact, this will be implemented by computing the conditional distribution 
of $p_{2i}$ given $p_{1i}$.
More specifically,  
$p_i^S$ aggregating $p_{1i}$ and $p_{2i}$  is defined as follows:
\begin{eqnarray}
    p_i^S = C(p_{2i}|p_{1i})
    \equiv P_{G_0}(P_{2i} \le p_{2i} | P_{1i} = p_{1i})
\label{eqn:p_def2}
\end{eqnarray}
which is the conditional cumulative distribution of $p_{2i}$ given $p_{1i}$. 

As in two-stage FDR(H), we can show that $p_i^S$ defined in \eqref{eqn:p_def2} also follows uniform distribution under $G_0$.

\begin{theorem} 
If $(\hat \beta_i, y_i) \sim G_0(\beta, y)$,  
$p_i^S$'s defined in (\ref{eqn:p_def2}) follow the uniform distribution in $(0,1)$.  
\label{thm:typeS}
\end{theorem}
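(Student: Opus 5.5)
The plan is to treat $p_i^S$ as a conditional probability integral transform, the same device that underlies the Rosenblatt transform. Under $G_0$, the pair $(P_{1i},P_{2i})$ has joint distribution function $C(\cdot,\cdot)$ in \eqref{eqn:C_ab}, and each marginal is $U[0,1]$ by the assumption stated after \eqref{eqn:p2i}. I will assume, as is implicit in the use of a parametric copula, that $C$ is absolutely continuous with density $c(u,v)$. Then the conditional distribution function of $P_{2i}$ given $P_{1i}=u$ is
\begin{eqnarray*}
C(v\mid u)=\frac{\partial C(u,v)}{\partial u}=\int_0^v c(u,s)\,ds .
\end{eqnarray*}
For almost every $u$ this is a continuous, nondecreasing function of $v$, running from $0$ at $v=0$ to $1$ at $v=1$.

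The key step is to compute the distribution of $P_i^S$ by conditioning on $P_{1i}$. Fix $t\in(0,1)$. I will show
\begin{eqnarray*}
P\bigl(P_i^S\le t\mid P_{1i}=u\bigr)=P\bigl(C(P_{2i}\mid u)\le t\mid P_{1i}=u\bigr)=t .
\end{eqnarray*}
This is the univariate probability integral transform applied to the conditional law of $P_{2i}$ given $P_{1i}=u$. That conditional law has the continuous distribution function $C(\cdot\mid u)$, so $C(P_{2i}\mid u)$ is $U[0,1]$ conditionally on $P_{1i}=u$.

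To make this rigorous without assuming strict monotonicity, I will introduce the generalized inverse $C^{-1}(t\mid u)=\sup\{v: C(v\mid u)\le t\}$. Continuity of $C(\cdot\mid u)$ gives $\{C(P_{2i}\mid u)\le t\}=\{P_{2i}\le C^{-1}(t\mid u)\}$ up to a conditional null set. The conditional probability of that event is then $C(C^{-1}(t\mid u)\mid u)=t$, again by continuity.

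Integrating over the uniform marginal of $P_{1i}$ gives
\begin{eqnarray*}
P(P_i^S\le t)=\int_0^1 P\bigl(P_i^S\le t\mid P_{1i}=u\bigr)\,du=\int_0^1 t\,du=t .
\end{eqnarray*}
This proves that $p_i^S$ is $U(0,1)$ under $G_0$. As a by-product, $p_i^S$ is independent of $p_{1i}$, since its conditional law given $P_{1i}=u$ does not depend on $u$.

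The main obstacle is the measure-theoretic care needed in the key step: justifying that $\partial C/\partial u$ exists and is a genuine conditional distribution function for almost every $u$, and handling flat pieces or atoms. I would handle this by invoking the absolute continuity of $C$, or more generally the existence of regular conditional distributions for $(P_{1i},P_{2i})$ on $[0,1]^2$. Continuity of $P_{2i}$'s conditional law, which guarantees there are no atoms, follows from absolute continuity of $F_0$ together with the copula density.
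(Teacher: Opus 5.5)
Your proposal is correct and follows essentially the same route as the paper: the paper's $\alpha(u_1,x)$, defined by $P(P_{2i}\le \alpha(u_1,x)\mid P_{1i}=u_1)=x$, is your generalized inverse $C^{-1}(x\mid u_1)$. The paper likewise rewrites $\{P_i^S\le x\}=\{P_{2i}\le \alpha(P_{1i},x)\}$ and integrates the constant conditional probability $x$ against the uniform density of $P_{1i}$, and your sup-based inverse together with continuity of $C(\cdot\mid u)$ makes rigorous the event identity that the paper only asserts.
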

\begin{proof}
See the supplementary material. \end{proof}

\textbf{Comparision of Two Methods:}
Two-stage FDR(S)
 differs from two-stage FDR(H) 
 by introducing multiple thresholds. 
 Notably, two-stage FDR(S) does not require a tuning parameter 
such as $\gamma_1$ in two-stage FDR(H). 
Instead of relying on one threshold $\gamma_1$ in two-stage FDR(H), 
two-stage FDR(S) 
uses  $p_{1i}$ to adjust the decision 
rule for $p_{2i}$ through 
the conditional probability. 
We call this process as 
soft thresholds in the sense that 
we adjust rejection region for the primary variable in a smooth way which is demonstrated in Figure \ref{fig:auc}.
To emphasise the contrast with ``two-stage FDR(H)", we refer to this procedure as ``two-stage FDR(S)", representing the concept of a soft margin.

To clarify the differences between two-stage FDR(H) and two-stage FDR(S), 
we present Figure \ref{fig:auc} including (A), (B) and 
(C) for two methods, respectively. 
In each figure,  we consider 
three distinct points, $A$, $B$ and 
$C$ which are corresponding to 
$(p_{1i}, p_{2i}) =(0.4,0.1)$, $(0.8, 0.1)$ 
and $(0.1, 0.143)$. 
In Figure \ref{fig:auc}, 
(A) and (B) are the cases of rejection regions for different $\gamma_1$ values, 
$\gamma_1=0.9$ and $0.7$ respectively.  
We see that the rejection regions 
have rectangular forms which means 
that the cut-off value for $p_{2i}$ is identical whenever $p_{1i} < \gamma_1$.  
From (A) and (B), 
the rejections for $B$ and $C$ are different
 for different $\gamma_1$ values. 
When $\gamma_1=0.9$, all $A$, $B$ and $C$ 
are proceeded to test the primary variable $p_{2i}$, however $C$ is not rejected since $p_{2i}$ is larger than the cut-off value. 
On the other hand, when $\gamma_1=0.7$, 
$B$ does not move to the testing of the primary variable, however $C$ is rejected to be significant. 
This shows that there is trade-off relationship between $\gamma_1$ and the cut-off value for $p_{2i}$. For larger value of $\gamma_1$ (equivalently, weak requirement on the auxiliary variable),  the cut-off value for $p_{2i}$
 is small, i.e., more stringent on testing the primary variable.  Conversely, 
 for small value of $\gamma_1$ (stringent on the auxiliary variable), 
 testing for the primary variable is less stringent, i.e., the cut-off value is larger for $p_{2i}$.  
Regarding two-stage FDR(S), 
we observe similar patterns, however 
the rejection region is not a rectangular form 
which means the rejection for $p_{2i}$ depends on $p_{1i}$ coming from the conditional distribution.

To summarise differences between two methods, 
both methods have a trade-off relationship between the first and second tests, but the major difference lies in the shape of the rejection regions. As mentioned earlier, in the case of two-stage FDR(H), if a variable passes the first stage through $p_{1i} \leq \gamma_1$, 
the same cutoff value is applied to $p_{2i}$ in the second stage. In this case, 
if it was eliminated in the first stage ($p_{1i} >\gamma_1$), there is no possibility of rejecting the primary variable regardless of the value of $p_{2i}$.   On the other hand, in the case of two-stage FDR(S), the rejection region of $p_{2i}$ gradually becomes larger depending on the significance strength shown by the auxiliary variable (small $p_{1i}$). 
Unlike  two-stage 
FDR(H),  two-stage FDR(S) has 
a possibility that the primary variable can be rejected in all range of $p_{1i}$.

\begin{figure}[htbp]
   \centering    
\includegraphics[width = \textwidth]{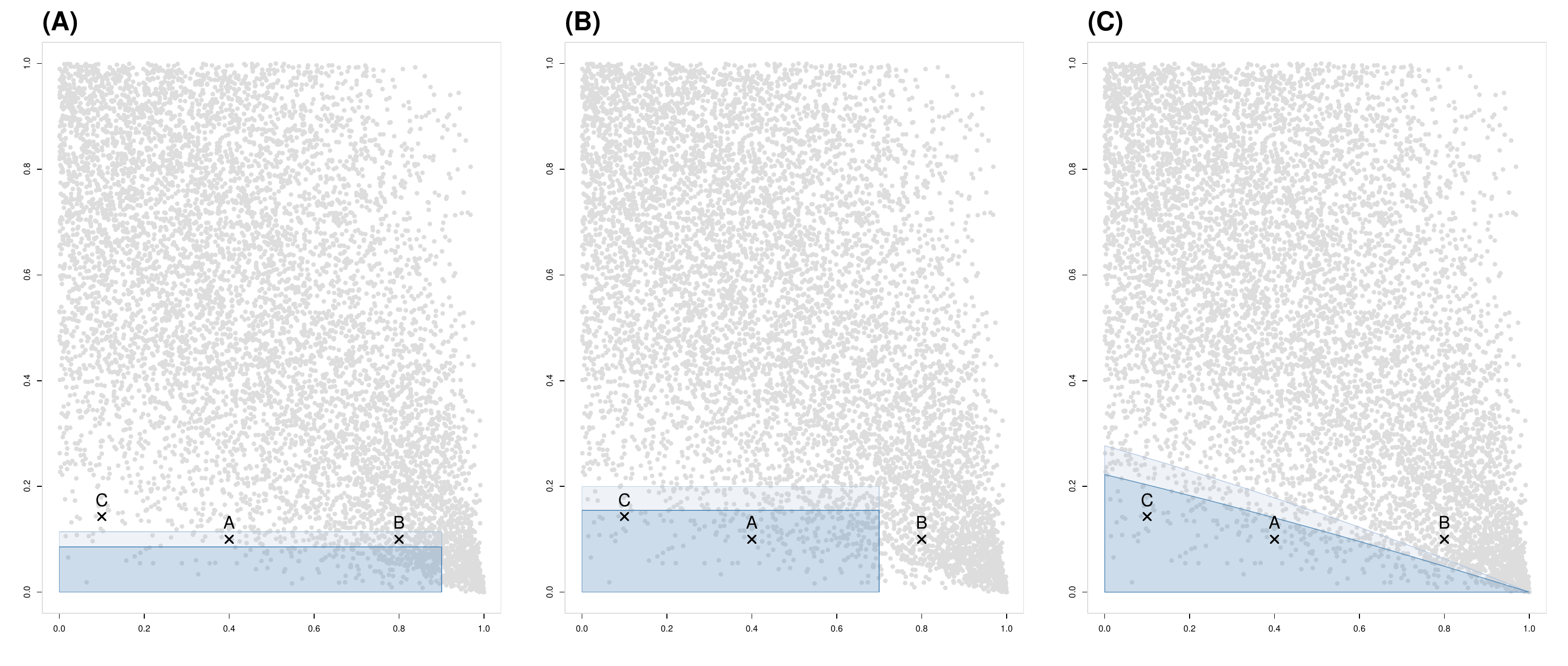}
    \caption{$p_i$'s for two-stage procedures (A),(B) of two-stage FDR(H) and (C) of two-stage FDR(S). In (A), $p_i^H$ with the threshold $\gamma_1=0.9$ and (B) with the threshold $\gamma_1=0.7$}
     \label{fig:auc}
\end{figure}

\subsection{False discovery rate controlling procedures}
\label{subsec:fdr}
In the previous section, 
Theorem \ref{thm:typeH} and \ref{thm:typeS} show that suggested $p$-values for both methods 
are uniformly distributed when $(\hat \beta_i, y_i) \sim G_0(\beta, y)$. 

Based on this property of $p_i$s,  we apply a method to control the FDR using $p_i^H$s and $p_i^S$s 
in \eqref{eqn:p_def1} and \eqref{eqn:p_def2}. Our goal is to estimate the FDR and ensure that its value does not exceed a given FDR level, denoted as $\alpha$.
We test $M$ hypotheses  in \eqref{eqn:hypothesis}, 
for example, in the Set4$\Delta$ data  introduced in section \ref{sec:data}, we test 
the significance of logfold changes .  

For a given $\gamma$, the estimated FDR is expressed as:
\begin{equation}
\label{eqn:storey1}
\widehat{FDR ({\gamma})}=\frac{\widehat{\pi_0}\gamma M }{\max( R_M(\gamma ),1)} 
\end{equation}
where 
\begin{itemize}
\item $M$ represents the number of hypotheses tested, 
\item $\widehat{\pi_0}$ is an estimate of the proportion of true negatives among $M$,
\item $R_M(\gamma) =\#\{i : p_i \leq \gamma, 1\leq i \leq M \}$ is the number of rejections, where $p_i$ is either $p_i^H(\gamma_1)$ or $p_i^S$. 
\end{itemize}
The denominator in \eqref{eqn:storey1} takes the maximum value between the number of rejected hypotheses and 1. 
Here,
$\widehat{\pi_0}$ is estimated by using 
\begin{equation}
\widehat{\pi_0}=\frac{\#\{p_i>\lambda\}}{(1-\lambda)M}
\label{eqn:pi0}
\end{equation}
and we set $\lambda=0.5$ for this study.

\textbf{Selection of $\gamma_1$ in two-stage FDR(H)} \\
Since $p_i^H(\gamma_1)$ is uniformly distributed with any $\gamma_1\in[0,1]$, we select $\gamma_1$ which rejects most to maximise the true positive rate. i.e. 
\begin{align}
    \widehat{\gamma} &= \argmax_{\gamma} \big(\widehat{FDR ({\gamma})} \leq \alpha \big), \label{eqn:gamma}\\
    \widehat{\gamma_1} &= \argmax_{\gamma_1}\big(\#\{i:p_i^H(\gamma_1) \leq \widehat{\gamma}, 1 \leq i \leq M\}\big),
    \label{eqn:hatgamma1}\\
    p_i^H &= p_i^H(\hat{\gamma_1}).
\end{align}

We summarise our proposed methods
controlling FDR in {\bf Algorithm \ref{alg:typeH}} and {\bf Algorithm \ref{alg:typeS}}.

\begin{algorithm}[ht]
\caption{Two-stage FDR(H)}
\label{alg:typeH}
\begin{algorithmic}[1]
\State Estimate null distribution of $\beta$
\State Compute $p_{2i}$ under the null distribution of $\beta$, and $p_{1i}$ using the empirical distribution of $Y$
\State Select the copula between $p_{1i}$ and $p_{2i}$
\For{$\gamma_1 \in \{r_1, r_2, \ldots, r_K\} \subset [0,1]$}
    \State Compute $p_i^H(r_k)$ using $p_{1i}$, $p_{2i}$
    \State Compute $\widehat{\pi_0}$
    \State For fixed values of $\alpha$ and $\lambda$, check the number of rejections ($m_k$)
\EndFor
\State Compare the number of rejections for each $r_k$ and choose $\widehat{\gamma_1} = r_k$ as the one that maximises $m_k$
\State Reject if $P_i^H(\widehat{\gamma_1})\leq \widehat{\gamma}$, where $\gamma$ is defined in \eqref{eqn:gamma} with $\widehat{\gamma_1}$
\end{algorithmic}
\end{algorithm}

\begin{algorithm}[ht]
\caption{Two-stage FDR(S)}
\label{alg:typeS}
\begin{algorithmic}[1]
\State Estimate null distribution of $\beta$
\State Compute $p_{2i}$ under the null distribution of $\beta$, and $p_{1i}$ using the empirical distribution of $Y$
\State Select the copula between $p_{1i}$ and $p_{2i}$
\State Compute $p_i^S$ using $p_{1i}$, $p_{2i}$ with estimated copula
\State Compute $\widehat{\pi_0}$
\State Estimate $\widehat{\gamma}$ satisfying \eqref{eqn:gamma}
\State Reject if $p_i^S \leq \widehat{\gamma}$
\end{algorithmic}
\end{algorithm}

\section{Simulation Studies}
In our study, we generate a dataset of $M=8,000$ pairs of variables, $(\beta_i, y_i)$, for $1 \leq i \leq 8,000$. This dataset is designed to simulate the characteristics of our real data example, the Set4$\Delta$ data, but in a simplified form. 
The auxiliary variable $y_i$ 
\begin{eqnarray}
y_i \sim Gamma (3,4)
\end{eqnarray}
with the mean value $3/4$ 
and the primary variable $\beta_i$
\begin{eqnarray*}
\beta_i \sim f(\beta)  &=& 
p_0f_0(\beta) + (1-p_0) f_1(\beta)\\
&=& 0.95 \phi(\beta) + 0.05(0.5 \phi(\beta-\mu) + 0.5\phi(\beta+\mu))
\end{eqnarray*}
for some $\mu>0$. 
To closely replicate the joint distribution observed in the Set4$\Delta$ data, we model the pairs $(\beta_i, y_i)$ to be jointly distributed following a Clayton copula.

In our first scenario we used the Clayton copula to calculate \(p\) values. We varied the parameters \(\mu\) and Kendall's \(\tau\) and assessed the \({FDR}\) in \eqref{eqn:FDR}. 
Our Monte Carlo simulations provide an approximate FDR which is  
$FDR \approx \frac{1}{K} \sum_{k=1}^K  \frac{ V_k }{R_k \vee 1}$ 
where $R_k$ and $V_k$ are 
the number of rejections and that of false rejections, respectively at $k$th simulation among $K$ simulations. 
Let $\theta_i = {I}(H_{1i} ~\mbox{is true})$.  
As statistical powers,  we use the true positive rate (TPR) which is 
\begin{eqnarray}
 TPR = E\left( \frac{ \sum_{i=1}^M \theta_i \delta_i }{\sum_{i=1}^M\theta_i \vee 1}\right)  
 \label{eqn:TPR}
\end{eqnarray}
where  $a\vee 1 =\max(a,1)$ is the maximum of $a$ and 1,    $\sum_{i=1}^M \theta_i $ 
is the number of $H_{1i}$ and 
$\sum_{i=1}^M \theta_i \delta_i $ 
is the number of the true positives. 
Similarly, an approximate 
TPR is calculated  $\frac{1}{K}\sum_{k=1}^K \frac{S_k}{M_k}$ 
where $S_k$  and $M_k$ 
are  the number of true rejections and that of $H_{1i}$  
at $k$th simulation among $K$ simulations. We use $K=1000$ in our simulation studies.   
{We consider the various methods as follows : 
locfdr in \cite{efron2004large}, Storey in \cite{storey2002}, IHW in \cite{Ignatiadis}, 
Boca and Leek in \cite{Boca-Leek} and 
our proposed two-stage FDR methods such as Type H and Type S.  }

{
The results shown in Tables \ref{tbl:sim_tau} show the effects of different degrees of dependence (\( \tau \)) with a constant distance between the null and alternative distributions (\( \mu = 3.0 \)), and \( \mu \) changes while \( \tau \) remains fixed at -0.4. Notably, all methods indicate an increase in TPR as \( \mu \) increases.
The one-stage FDR methods, which depend only on \( \hat{\beta}_i \) values, show consistent FDR and TPR, highlighting their limitations in adapting to different levels of dependence. Conversely, both covariate-assisted and two-stage FDR methods show higher TPR with increasing dependence. However, their effectiveness in controlling FDR varies. Covariate-assisted methods, despite extending the rejection region to reject more hypotheses, fail to maintain FDR control when dependencies are present. 
This suggests that covariate-assisted methods extend the rejection region where null hypotheses accumulate. 
Meanwhile, two-stage FDR methods effectively maintain FDR control even as rejection rates increase with increasing dependency, thereby achieving a better TPR.\\
This result suggests that the inclusion of auxiliary variables can improve the discovery of significant results, but the results should be interpreted with caution. In Section 5, we demonstrate the potential for increased discovery using auxiliary variables. Still, the results of covariate-assisted methods should be treated with particular caution.
}

\begin{table}[]
\centering
\caption{Comparison of False Discovery Rate (FDR) and True Positive Rate (TPR) across different settings for the mean parameter, $\mu$, and Kendall's $\tau$ coefficient.}
\label{tbl:sim_tau}
\resizebox{\textwidth}{!}{%
\begin{tabular}{ccccccccc}
\hline
 & \multirow{2}{*}{$\tau$} & \multicolumn{2}{c}{One-stage FDR} & \multicolumn{3}{c}{Covariate-assisted FDR} & \multicolumn{2}{c}{Two-stage FDR} \\ \cline{3-9} 
 &  & locfdr & Storey & IHW & Boca and Leek & AdaFDR & type H & type S \\ \hline
\multirow{5}{*}{$\widehat{FDR}$} & 0.0 & \multirow{5}{*}{0.013 (0.013)} & \multirow{5}{*}{0.046 (0.020)} & 0.043 (0.020) & 0.000 (0.006) & 0.048 (0.024) & 0.046 (0.021) & 0.045 (0.020) \\
 & -0.2 &  &  & 0.063 (0.032) & 0.000 (0.006) & 0.160 (0.061) & 0.044 (0.019) & 0.038 (0.016) \\
 & -0.4 &  &  & 0.070 (0.039) & 0.000 (0.007) & 0.480 (0.085) & 0.038 (0.016) & 0.028 (0.012) \\
 & -0.6 &  &  & 0.074 (0.041) & 0.014 (0.020) & 0.837 (0.019) & 0.032 (0.018) & 0.017 (0.009) \\
 & -0.8 &  &  & 0.063 (0.045) & 0.945 (0.002) & 0.873 (0.007) & 0.029 (0.028) & 0.005 (0.006) \\ \hline
\multirow{5}{*}{$\widehat{TPR}$} & 0.0 & \multirow{5}{*}{0.213 (0.038)} & \multirow{5}{*}{0.378 (0.051)} & 0.367 (0.052) & 0.002 (0.006) & 0.375 (0.059) & 0.379 (0.051) & 0.376 (0.051) \\
 & -0.2 &  &  & 0.364 (0.050) & 0.004 (0.008) & 0.380 (0.073) & 0.490 (0.042) & 0.577 (0.042) \\
 & -0.4 &  &  & 0.364 (0.047) & 0.014 (0.020) & 0.500 (0.062) & 0.643 (0.033) & 0.804 (0.026) \\
 & -0.6 &  &  & 0.392 (0.039) & 0.115 (0.052) & 0.656 (0.037) & 0.739 (0.025) & 0.928 (0.014) \\
 & -0.8 &  &  & 0.466 (0.039) & 0.582 (0.028) & 0.672 (0.051) & 0.790 (0.022) & 0.980 (0.007) \\ \hline
\end{tabular}%
}
\label{tbl:sim_mu}
\resizebox{\textwidth}{!}{%
\begin{tabular}{cccclllcc}
\hline
 & \multirow{2}{*}{$\mu$} & \multicolumn{2}{c}{One-stage FDR} & \multicolumn{3}{l}{Covariate-assisted FDR} & \multicolumn{2}{c}{Two-stage FDR} \\ \cline{3-9} 
 &  & locfdr & Storey & IHW & Boca and Leek & AdaFDR & type H & type S \\ \hline
\multirow{5}{*}{$\widehat{FDR}$} & 2.0 & 0.022 (0.079) & 0.026 (0.065) & 0.174 (0.084) & 0.000 (0.000) & 0.643 (0.247) & 0.030 (0.022) & 0.024 (0.016) \\
 & 2.5 & 0.016 (0.026) & 0.038 (0.030) & 0.121 (0.057) & 0.000 (0.011) & 0.691 (0.101) & 0.033 (0.018) & 0.026 (0.012) \\
 & 3.0 & 0.013 (0.013) & 0.046 (0.020) & 0.070 (0.039) & 0.000 (0.007) & 0.580 (0.085) & 0.038 (0.016) & 0.028 (0.012) \\
 & 3.5 & 0.010 (0.008) & 0.049 (0.017) & 0.081 (0.047) & 0.003 (0.007) & 0.442 (0.088) & 0.045 (0.018) & 0.030 (0.012) \\
 & 4.0 & 0.008 (0.006) & 0.051 (0.016) & 0.130 (0.053) & 0.008 (0.008) & 0.404 (0.080) & 0.050 (0.018) & 0.031 (0.012) \\ \hline
\multirow{5}{*}{$\widehat{TPR}$} & 2.0 & 0.010 (0.007) & 0.015 (0.013) & 0.073 (0.026) & 0.000 (0.000) & 0.183 (0.090) & 0.190 (0.035) & 0.308 (0.045) \\
 & 2.5 & 0.065 (0.022) & 0.127 (0.041) & 0.168 (0.034) & 0.001 (0.002) & 0.348 (0.070) & 0.420 (0.039) & 0.595 (0.038) \\
 & 3.0 & 0.213 (0.038) & 0.378 (0.051) & 0.364 (0.047) & 0.014 (0.020) & 0.500 (0.062) & 0.643 (0.033) & 0.804 (0.026) \\
 & 3.5 & 0.428 (0.044) & 0.642 (0.041) & 0.614 (0.037) & 0.175 (0.060) & 0.681 (0.055) & 0.803 (0.024) & 0.920 (0.016) \\
 & 4.0 & 0.643 (0.038) & 0.828 (0.027) & 0.781 (0.026) & 0.454 (0.063) & 0.838 (0.035) & 0.902 (0.017) & 0.973 (0.009) \\ \hline
\end{tabular}%
}
\end{table}

Our simulation results indicate that our proposed methods can control the False Discovery Rate (FDR) when the dependency structure is known. However, in real data, the copula is often unknown. To address the question, "What happens if we mis-specify the copula in the models?" we explore the consequences of choosing an incorrect copula for calculating $p_i$ in our proposed methods. While the correct copula selection is essential for controlling the FDR, it is equally crucial to understand the consequences of using an incorrect copula. To evaluate the robustness of our methodology to the selection of copula, 
we choose a copula that does 
not align with the true underlying dependencies. Specifically, we chose the BB7, BB6 and Joe copulas for comparison. These were chosen because of their similarity to the Clayton copula, which allowed us to explore the effects of potential misspecification when replacing the Clayton copula with these alternatives. Table \ref{tbl:sim_cop} displays the estimated FDR and true positive rate (TPR) for each of the four copulas when $\mu=3$ and $\tau=-0.4$.

\begin{table}
\centering
\caption{Comparison of False Discovery Rate (FDR) and True Positive Rate (TPR) in the Presence of Misspecified Copulas.}

\label{tbl:sim_cop}
\begin{tabular}{cccccc}
\hline
                                 & \multirow{2}{*}{Copula} & \multicolumn{2}{c}{One-stage FDR}                                 & \multicolumn{2}{c}{Two-stage FDR} \\ \cline{3-6} 
                                 &                         & locfdr                         & Storey                         & type H         & type S         \\ \hline
\multirow{4}{*}{$\widehat{FDR}$} & Clayton                 & \multirow{4}{*}{0.013 (0.013)} & \multirow{4}{*}{0.046 (0.020)} & 0.038 (0.016)  & 0.028 (0.012)  \\
                                 & BB7                     &                                &                                & 0.006 (0.012)  & 0.001 (0.0.02) \\
                                 & BB6                     &                                &                                & 0.008 (0.009)  & 0.002 (0.003)  \\
                                 & Joe                     &                                &                                & 0.049 (0.019)  & 0.032 (0.014)  \\ \hline
\multirow{4}{*}{$\widehat{TPR}$} & Clayton                 & \multirow{4}{*}{0.213 (0.038)} & \multirow{4}{*}{0.378 (0.051)} & 0.643 (0.033)  & 0.804 (0.026)  \\
                                 & BB7                     &                                &                                & 0.466 (0.035)  & 0.588 (0.030)  \\
                                 & BB6                     &                                &                                & 0.518 (0.037)  & 0.630 (0.032)  \\
                                 & Joe                     &                                &                                & 0.660 (0.032)  & 0.805 (0.026)  \\ \hline
\end{tabular}
\begin{tabular}{cccccc}
\hline
                                 & \multirow{2}{*}{Copula} & \multicolumn{2}{c}{One-stage FDR}                                 & \multicolumn{2}{c}{Two-stage FDR} \\ \cline{3-6} 
                                 &                         & locfdr                         & Storey                         & type H         & type S         \\ \hline
\multirow{4}{*}{$\widehat{FDR}$} & Clayton                 & \multirow{4}{*}{0.013 (0.013)} & \multirow{4}{*}{0.046 (0.020)} & 0.038 (0.016)  & 0.028 (0.012)  \\
                                 & BB7                     &                                &                                & 0.038 (0.017)  & 0.029 (0.012)  \\
                                 & BB6                     &                                &                                & 0.036 (0.016)  & 0.034 (0.013)  \\
                                 & Joe                     &                                &                                & 0.031 (0.014)  & 0.028 (0.011)  \\ \hline
\multirow{4}{*}{$\widehat{TPR}$} & Clayton                 & \multirow{4}{*}{0.213 (0.038)} & \multirow{4}{*}{0.378 (0.051)} & 0.643 (0.033)  & 0.804 (0.026)  \\
                                 & BB7                     &                                &                                & 0.643 (0.033)  & 0.805 (0.026)  \\
                                 & BB6                     &                                &                                & 0.641 (0.033)  & 0.800 (0.026)  \\
                                 & Joe                     &                                &                                & 0.643 (0.033)  & 0.804 (0.026)  \\ \hline
\end{tabular}
\end{table}

From Table \ref{tbl:sim_cop},   
we see that 
all four methods control a given level of FDR for all settings while locfdr and two-stage FDR (S) obtain lower FDR values compared to the nominal level $\alpha=0.05$. 
However, two-stage FDR (S) and two-stage FDR (H) obtain 
highest  and the second highest TPRs, respectively across all settings.  These results show that 
the supplementary information from the auxiliary variable may affect the power of testing procedures when it is appropriately reflected in testing procedure.

\section{Real Data Analysis}
\subsection{Estimation of the Marginal $p$-values and Optimal Copula}

\textbf{auxiliary variable (\(p_{1i}\)) : } The standard deviation of the logfold changes of each gene is used as an auxiliary variable. Since we have small number of samples, we estimated the standard deviation by bootstrap method for all possible combinations, resulting in 729 (i.e. $3^3\times3^3$) possible combinations, and derived the $p_{1i}$ values from their empirical distribution. 
The detailed proceudure is described in {\bf Algorithm} \ref{alg:lfc_sd}.

\begin{algorithm}
\caption{Estimation of Standard Deviation using Bootstrap for Set4$\Delta$ data}
\label{alg:lfc_sd}
\small
\begin{algorithmic}[1]
\For{$i \gets 1$ to $8220$} \Comment{\parbox[t]{0.5\linewidth}{Iterate through all genes}}
    \For{$b \gets 1$ to $729$} \Comment{\parbox[t]{0.5\linewidth}{Iterate through all bootstrap samples}}
        \State \parbox[t]{0.9\linewidth}{Sample three knockouts ($KO_{i(1)}^b, KO_{i(2)}^b, KO_{i(3)}^b$) from the original knockouts ($KO_{i1}, KO_{i2}, KO_{i3}$) with replacement.}
        \State \parbox[t]{0.9\linewidth}{Sample three wildtypes ($WT_{i(1)}^b, WT_{i(2)}^b, WT_{i(3)}^b$) from the original wildtypes ($WT_{i1}, WT_{i2}, WT_{i3}$) with replacement.}
        \State \parbox[t]{0.9\linewidth}{Calculate logfold changes: $\Tilde{\beta}_{i}^b = \log \frac{\bar{KO}^b}{\bar{WT}^b}$}
    \EndFor
    \State Compute the sample standard deviation of $\Tilde{\beta}_{i}^b$ values
    \State Determine the $p_{1i}$ value based on the empirical distribution of these standard deviations
\EndFor
\end{algorithmic}
\end{algorithm}
\textbf{primary variable (\(p_{2i}\)) : } For the primary variable, we adopt the approach proposed by \cite{ramos2021} to evaluate the log change values of each gene. This method estimates the null distribution as a mixture of two normal distributions.i.e. 
\begin{equation*}
    f_0(\beta) = 0.615\phi\big(\frac{\beta}{0.063}\big) + 0.385\phi\big(\frac{\beta+0.002}{0.205}\big)
\end{equation*} 
where $\phi$ is a probability distribution function of a normal distribution.
Based on this null, $p_{2i}$ is calculated as described in \eqref{eqn:p2i}. 

\textbf{Optimal Copula :} we evaluated the validity of various copulas - including Gaussian, Frank, Clayton, Gumbel and Joe - for modelling the dependencies between $p_{1i}$ and $p_{2i}$. This evaluation, based on Log-Likelihood (LogLik), Akaike Information Criterion (AIC) and Bayesian Information Criterion (BIC), is illustrated in Table \ref{tbl:copula_real}. The Clayton copula was selected owing to its outstanding performance in all criterion, indicating an optimal balance between model complexity and fit.

\begin{table}
\centering
\caption{The Log-likelihood(LogLik), AIC, BIC of Gaussian, Frank, Clayton, Gumbel and Joe copulas with set4$\Delta$ dataset}
\label{tbl:copula_real}
\begin{tabular}{@{}cccc@{}}
\toprule
Family   & LogLik          & AIC              & BIC              \\ \midrule
Gaussian & 306.45          & -610.91          & -604.80          \\
Frank    & 332.84          & -663.68          & -657.57          \\
Clayton  & \textbf{393.08} & \textbf{-784.16} & \textbf{-778.05} \\
Gumbel   & 201.79          & -401.57          & -395.46          \\
Joe      & 96.29           & -190.57          & -184.46          \\ \bottomrule
\end{tabular}%
\end{table}

\subsection{Application of FDR Controlling Procedure}

Based on the estimation of the distribution $G_0$ in the previous section, we apply the {\bf Algorithm \ref{alg:typeH}} and {\bf Algorithm \ref{alg:typeS}} with $\alpha = 0.10$. {For the distributions of $p_i^H$ and $p_i^S$, see Figure \ref{fig:hist_pval} in the supplementary material.} 
In \textbf{Algorithm \ref{alg:typeH}}, the process of selecting $\gamma_1$ involved analysing the relationship between the number of hypothesis rejections and various $\gamma_1$ values. Using the equation \eqref{eqn:hatgamma1}, we determined the optimal value of $\hat{\gamma}_1$ as the value that maximises the number of rejections. Figure \ref{fig:gamma_selection} illustrates the relationship between $\gamma_1$ values and the number of rejections. It shows that the number of rejections increases as $\gamma_1$ increases, peaking at $\gamma_1 = 0.987$, after which it starts to decrease. With the choice of $\hat{\gamma}_1 = 0.987$, this resulted in the rejection of $492$ null hypotheses out of a total of $8,220$.

\begin{figure}[ht]
    \centering
    \includegraphics[width=0.7\textwidth]{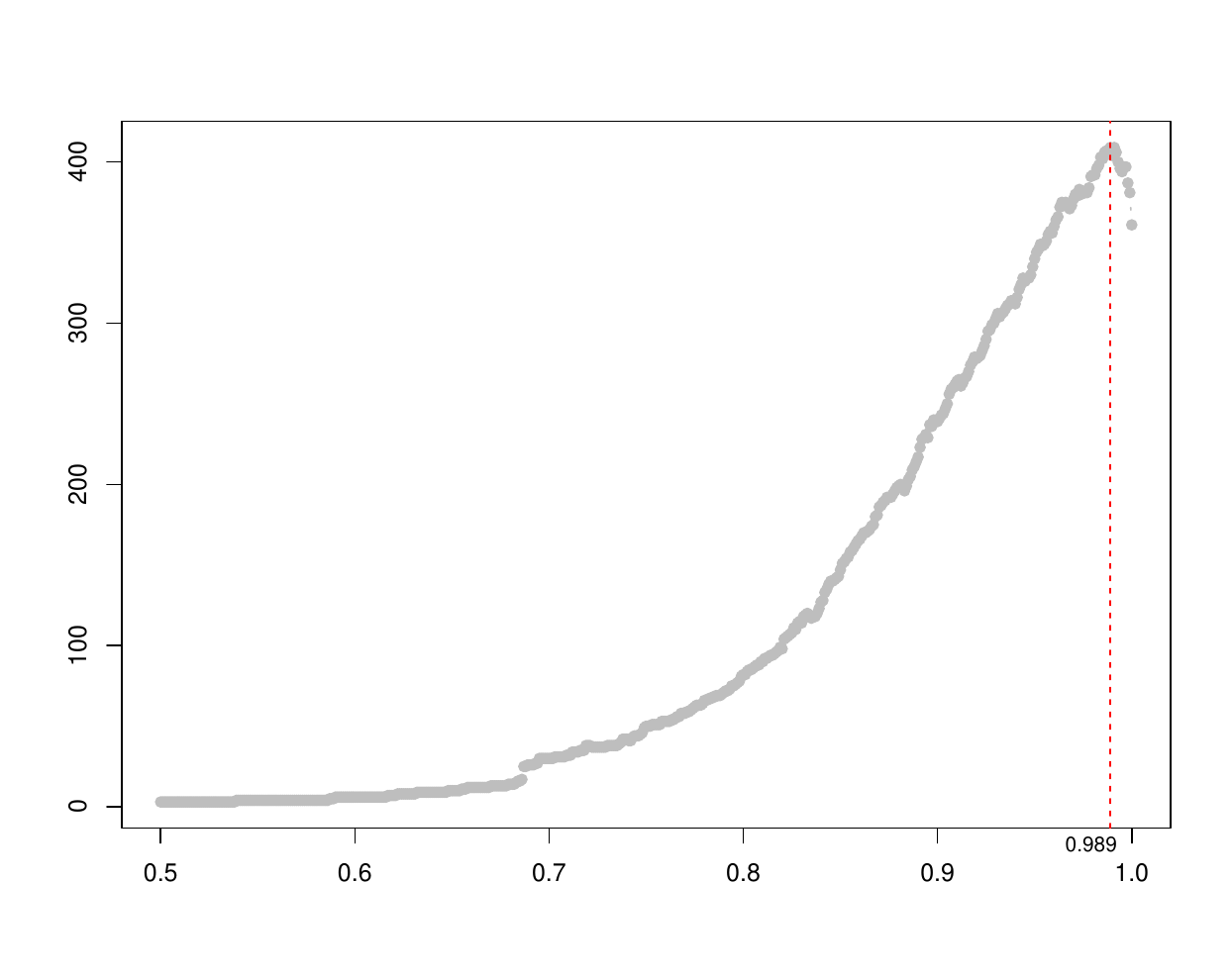}
    \caption{Selection of $\gamma$ in two-stage FDR(H). The x-axis displays $\gamma$ and the y-axis represents the corresponding number of rejection with two-stage FDR(H).}
    \label{fig:gamma_selection}
\end{figure}

In order to assess the performance of four different FDR controlling methods, we compare the number of rejections for both one-stage and two-stage approaches. Firstly, employing the one-stage FDR(locfdr), we identified $249$ rejected genes. A slightly higher number of genes, $424$, were rejected when applying the one-stage FDR(Storey). However, when utilizing two-stage FDR methods, we observed a significant increase in gene rejections. Specifically, the two-stage FDR(H) resulted in $485$ rejected genes, while the two-stage FDR(S) led to a $582$ rejections. These findings support the simulation results presented in Table \ref{tbl:sim_tau}, demonstrating the greater power of two-stage FDR methods in detecting deferentially expressed genes (DEG). Figure \ref{fig:venn} in online supplementary matierial offers an in-depth comparative analysis of the proposed two-stage FDR methods introduced in this study versus traditional one-stage FDR methods (local FDR and Storey's method), applied to the set4$\Delta$ mutant dataset.

Figure \ref{fig:rej_region} shows  
four different rejection regions corresponding to the two one-stage FDR methods and the two two-stage FDR methods.  
These different rejection regions
provide further insight into the comparison between one-stage and two-stage FDR methods. 
As shown in the figures, the one-stage FDR methods determine the rejection region based on only  the logfold changes in the $y$-axis, while in the case of the two-stage FDR(H), it exhibits a rectangular-shaped rejection region determined independently by cutoff values for $y$ and $\beta$. On the other hand, in the case of two-stage FDR(S), it shows a curved rejection region, indicating that the dependence between the two variables has been taken into account.
{The further result compared to DESeq2 (\cite{love2014moderated}), a well-known DEG analysis, is also represented in the supplementary material.}

{\bf HXT gene family :}  We conducted an investigation into the functional role of these newly rejected gene sets using GO and MIPs databases  of the two rejected gene lists in \citet{robinson2002}.
In both cases, we observed an enrichment of genes associated with stress response and cell wall maintenance among the differentially expressed genes, consistent with our previous findings in \cite{jethmalani2021}. Additionally, the new analysis revealed an interesting enrichment of genes involved in hexose transport, particularly from the HXT gene family.
Figure \ref{fig:rej_region} include 
the locations of the HXT gene family represented by crosses and 
show that the two-stage FDR(S) reject all of them  while the others reject part of them.
These rejected genes mainly belong to the minor category of hexose transporters, known for primarily transporting mannitol and sorbitol (\citet{diderich1999}). Remarkably, under stress conditions like hypoxia or anaerobia, the transcription of these transporters is altered, likely leading to increased transport of the preferred carbon source, glucose, into cells (\citet{rintala2008}). This finding identifies another class of genes regulated by Set4 during hypoxia that play a crucial role in the cell's adaptation to low oxygen. Moreover, a number of these genes are located within approximately 40 kB of the chromosome end, including HXT15, HXT13, and HXT8, consistent with previous findings that Set4 regulates genes within the subtelomeric regions of chromosomes (\citet{jethmalani2021}). In conclusion, this differential gene expression analysis method has successfully identified a new class of genes involved in stress responses, regulated by the Set4 protein in yeast grown under hypoxic conditions. 

\begin{figure}[htbp]
    \centering
    \includegraphics[width=\textwidth]{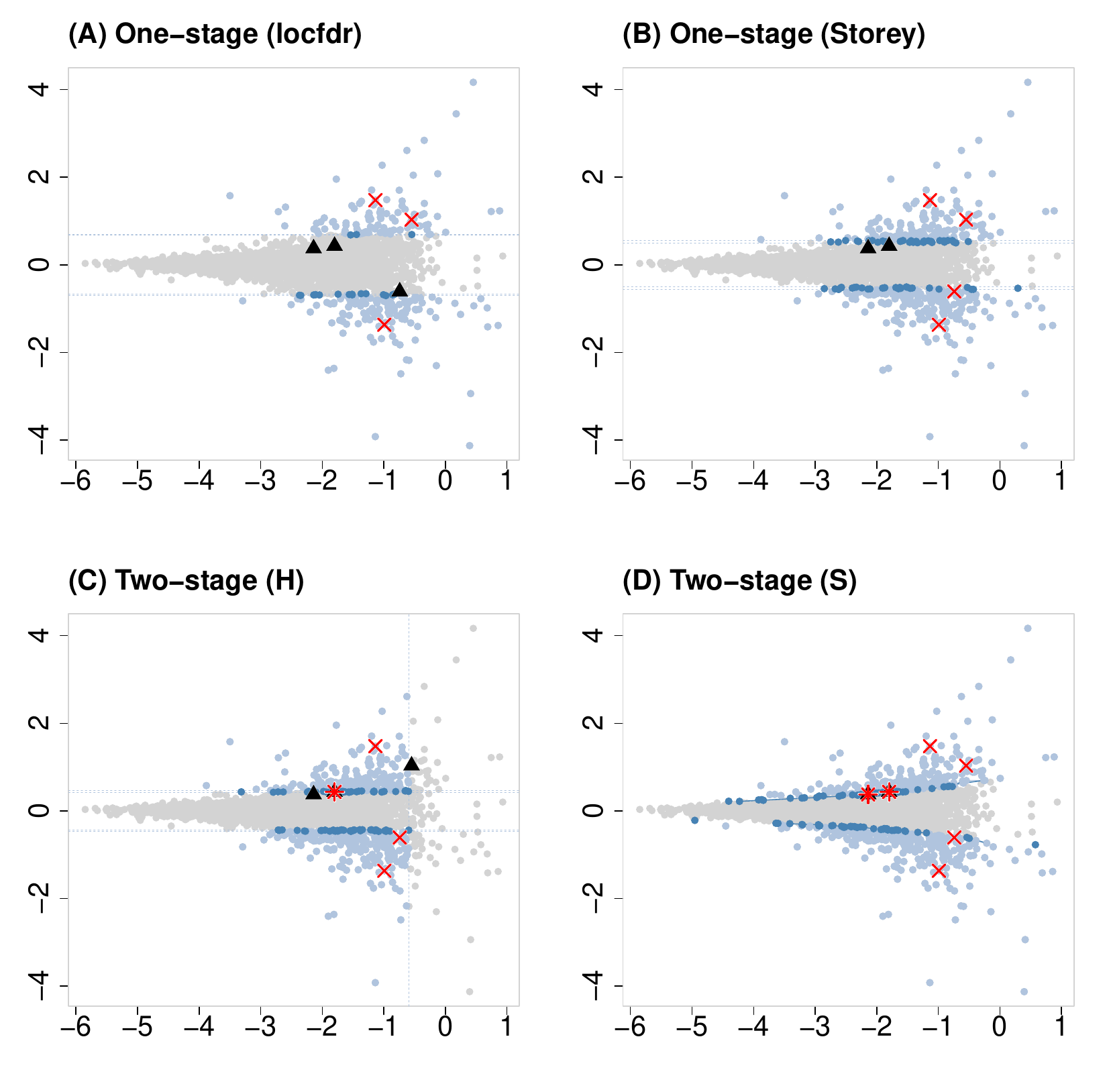}
    \caption{Comparison of Rejection Regions: One-Stage vs. Two-Stage FDR Methods. The x-axis represents the standard deviation of logfold changes on a logarithmic scale as an auxiliary variable, while the y-axis shows the logfold changes as the primary variable. The sky-blue points corresponds to the rejection region at $\alpha < 0.05$, and the darker blue points represents the rejection region at $\alpha < 0.10$. In the legend, \textcolor{red}{\cross} and \textcolor{red}{\rstar} signify the HXT gene family rejection with $\alpha = 0.05$ and $\alpha = 0.10$, respectively, while \tri indicates non-rejected genes in the HXT gene family.}
    \label{fig:rej_region}
\end{figure}

\begin{table}[]
\centering
\caption{Comparison of total and HXT family gene rejections using one-stage (locfdr, Storey), covarate-assisted (IHW, Boca and Leek, AdaFDR) and two-stage (types H and S) FDR methods at $\alpha = 0.05$ and $\alpha = 0.10$.}
\label{tbl:nrej}
\resizebox{\textwidth}{!}{%
\begin{tabular}{ccccccccc}
\hline
 &  & \multicolumn{2}{c}{One-stage FDR} & \multicolumn{3}{c}{Covariate-assisted FDR} & \multicolumn{2}{c}{Two-stage FDR} \\ \cline{3-9} 
 & $\alpha$ & locfdr & Storey & IHW & Boca and Leek & AdaFDR & type H & type S \\ \hline
\multirow{2}{*}{\# rejected genes} & 0.05 & 230 & 361 & 486 & 110 & 844 & 407 & 503 \\
 & 0.10 & 249 & 424 & 633 & 116 & 1329 & 485 & 582 \\
\multirow{2}{*}{\# of rejected HXT family} & 0.05 & 3 & 4 & 5 & 3 & 6 & 3 & 4 \\
 & 0.10 & 3 & 4 & 5 & 3 & 6 & 4 & 6 \\ \hline
\end{tabular}%
}
\end{table}

\section{Concluding Remarks}
In this paper, we develop the two stage procedures controlling FDR based on 
incorporating an auxiliary variable to improve the power of testing the primary variable. Furthermore, we also  increase the credibility of the results via incorporating prior knowledge through the auxiliary variable. 
Rather than assuming the structure of standard deviation of logfold changes, we simply used the copula to model the dependence between the logfold changes and their standard deviations. Our methods allow for a more flexible and accurate representation of the joint distribution of these two variables. 
Two proposed methods select the significant genes in different ways. One takes the form of screening, while the other manifests as a progressive form by showing different rejection regions based on the values of auxiliary variable. Both methods are superior to existing methods without using 
any auxiliary variable. 
By taking into account the variability of the experiments and controlling for false discovery rate, the proposed methods can help to increase the confidence and reproducibility of gene expression analysis. 
We present numerical studies supporting our proposed methods and the real data example of 
Set4$\Delta$ mutant data. 
Our proposed methods can control a given level of FDR and are more powerful 
in detecting the true alternative hypotheses 
than one-stage procedures 
and some existing methods with covariate.    
Especially in the real data,  
two-stage FDR (S) 
is more efficient to select all the genes in the HXT gene family compared to all the other methods.

Overall, the proposed methods incorporating additional information 
into the problem of testing the primary variable are a valuable contribution to the field of gene expression analysis, and have the potential to improve the quality and reliability of experimental results.  
Additionally, the proposed methods can be widely applied beyond the gene-related data we introduced, to multiple testing problems 
where there exist both the primary and auxiliary variables.


\section{Data and Code Availability}
{The dataset(Set4$\Delta$) and source code developed for this research can be found at the following GitHub repository: \href{URL}{https://github.com/twostageFDR/.github}. 
This repository includes all scripts and instructions necessary to replicate the computational procedures and analyses presented in this paper.}

\section*{Conflict of Interest}
{The authors have declared no conflict of interest.}

\section*{Supporting Information}
Additional Supporting Information may be found in the online version of this article.

\section*{Acknowledgements}
This work was supported by the National Research Foundation of Korea (NRF) grant funded by the Korea government (MSIT) (No. 2020R1A2C1A01100526).

\bibliographystyle{abbrvnat} 
\bibliography{reference}

\label{lastpage}
\end{document}


\maketitle

\section{Technical Details}

\subsection{{Proof of Lemma 1}}
\begin{proof}
In order to prove the claim \eqref{eqn:claim}, we begin by observing that 
\begin{eqnarray}
P(P_i^H(\gamma_1) \le \gamma, P_{1i} \le \gamma_1) = P(P_i^H(\gamma_1) \le \gamma \given[] P_{1i} \le \gamma_1)\cdot P(P_{1i}\le \gamma_1).
\end{eqnarray}
We define $p_{i}^H(\gamma_1) = C(\gamma_1, p_{2i})$ when $p_{1i} \le \gamma_1$ in equation \eqref{eqn:gamma_def}.  
By employing the definition of $p_i$ and its equivalent formulation in terms of $\gamma$ in \eqref{eqn:gamma_def2}, we can establish the following equivalence:
\begin{eqnarray}
    \{P_i^H(\gamma_1) \le \gamma\} = \{C(\gamma_1, P_{2i}) \leq C(\gamma_1, \gamma_2)  \} = \{P_{2i} \le \gamma_2\}
\end{eqnarray}
since $C(\gamma_1,x)$ 
is increasing in $x$ for given $\gamma_1$. 
By substituting $\{P_i^H(\gamma_1) \le \gamma\}$ to $\{P_{2i} \le \gamma_2\}$, we obtain 
$P(P_i^H(\gamma_1) \le \gamma \given P_{1i} \le \gamma_1) = P(P_{2i} \le \gamma_2 \given P_{1i} \le \gamma_1)$  which  implies
$P(P_i^H(\gamma_1) \le \gamma, P_{1i} \le \gamma_1) = P(P_{2i} \le \gamma_2 , P_{1i} \le \gamma_1).$
\end{proof}

\subsection{Proof of Theorem 1}

\begin{proof}
Generally, it holds that
\begin{eqnarray}
P(P_i^H(\gamma_1) \le \gamma) = \underbrace{P(P_i^H(\gamma_1)\le \gamma, P_{1i} > \gamma_1)}_{\text{(a)}} + \underbrace{P(P_i^H(\gamma_1)\le \gamma, P_{1i} \le \gamma_1)}_{\text{(b)}}. 
\label{eqn:decomp}
\end{eqnarray}
If  $\gamma \leq \gamma_1$, 
we show that $(a) =0$ and $(b) =\gamma$. 
Since  $p_{1i} >\gamma_1$ implies  $P_i^H(\gamma_1)= p_{1i}$ from \eqref{eqn:p_def1},  
we have 
\begin{eqnarray*}
(a)=P(P_i^H(\gamma_1)\le \gamma, P_{1i} > \gamma_1) = P(P_{1i} \le \gamma, P_{1i} > \gamma_1 ) =0
\end{eqnarray*}
where the last equality is due to the fact that  the event  $\{P_{1i} \le \gamma, P_{1i} > \gamma_1 \}$  is empty due to   $\gamma \leq \gamma_1$. 
 By \eqref{eqn:claim} and the definition \eqref{eqn:gamma_def},  we have 
\begin{eqnarray*}
(b)=P(P_i^H(\gamma_1)\le \gamma, P_{1i} \le \gamma_1) = P(P_{2i} \le \gamma_2, P_{1i} \le \gamma_1) = \gamma.    
\end{eqnarray*}
Therefore, the sum of part (a) and part (b) is equal to $\gamma$ when $\gamma \leq \gamma_1$.\\
In the case of  $\gamma >\gamma_1$ we have   
\begin{eqnarray*} 
 (a)=P(P_i^H(\gamma_1)\le \gamma, P_{1i} > \gamma_1) = P(\gamma_1 \le P_{1i} \le \gamma) = \gamma - \gamma_1
 \end{eqnarray*}
 where we use the fact that $p_{i}^H=p_{1i}$ when $p_{1i}>\gamma_1$ and 
 $ \gamma>\gamma_1$. 
(b) Given $p_{1i} \le \gamma_1$, we can use the definition \eqref{eqn:gamma_def} above to obtain 
\begin{eqnarray}
    P_i^H(\gamma_1)= C(\gamma_1,p_{2i})
\end{eqnarray}
which implies that $P_i^H(\gamma_1)\leq C(\gamma_1, 1) = \gamma_1$.
Since $\gamma>\gamma_1$, the condition $P_i^H(\gamma_1)< \gamma$ always holds. Hence, 
we have 
\begin{eqnarray}
P(P_i^H(\gamma_1)< \gamma, P_{1i} \leq \gamma_1) = P(P_{1i} \leq \gamma_1) = \gamma_1.
\end{eqnarray}
From  $(a) =\gamma - \gamma_1$ and  $(b)=\gamma_1$, 
we can obtain the desired probability
$P(P_i^H(\gamma_1)\leq \gamma) = (a)+(b)  = \gamma.$ This completes the proof for $p_{i}$ of two-stage FDR(H).
\end{proof}

\subsection{Proof of Theorem 2}
\begin{proof}
We define  a function  $\alpha(u_1,x)$ satisfying  
\begin{eqnarray}
P(P_{2i} \le \alpha(u_1, x) | P_{1i} = u_1) = x
\label{eqn:alpha}
\end{eqnarray}
and denote $c(p_{2i}|p_{1i})$ as conditional probability density function of $p_{2i}$ given $p_{1i}$, 
 which is obtained from 
 $ c(\beta |y ) = 
 \frac{\partial  }{\partial \beta } C(\beta | y) = \frac{\partial }{\partial \beta}
  P(P_{2i}\leq \beta | P_{1i}=y)$.  
It is also obtained that 
$P(P_{2i} \leq z | P_{1i} =u_1 ) \leq x $ 
is equivalent to  $z \leq \alpha(u_1, x)$. 
Then  we have 
 $\{P_i^S \leq x  \}  =  
 \{ C(P_{2i}|P_{1i}) \leq x   \}  
 = \{P_{2i} \leq \alpha ( P_{1i},x) \}$. 
 Using $P_{1i} \sim U[0,1]$, i.e., $f_{P_{1i}}(p) = I_{[0,1]}(p)$, we have 
\begin{align*}
    P(P_{i}^S \le x) &= P(P_{2i} \leq \alpha (P_{1i},x)) \\
    &= \int_0^1 \underbrace{P(P_{2i} \leq \alpha(u_1, x) |P_{1i}=u_1 )}_{ \mbox{$=x$ by \eqref{eqn:alpha}}}   f_{P_{1i}} (u_1)  du_1   \\
                &= \int_{0}^{1} x du_1\\
                &= x
\end{align*}
for $x\in [0,1]$. 
Therefore,  $p_i^S$ of two-stage FDR(S) also follows uniform distribution in $[0,1]$. 
\end{proof}

\section{Copula Choice}
\label{sec:copula}
\subsection{Examples of Copula}
\label{sec:copula_example}
Copula allows us to accurately model complex relationships between variables. For example, consider the scatterplots shown in Figure \ref{fig:copula_example}, where all variables have the same marginal distribution, $U(0,1)$. Despite this uniform marginal distribution, the different patterns observed in the scatter plots result from variations in their joint distributions. 
We choose an appropriate copula using model selection criteria. 

\begin{figure}[ht]
  \centering
  \includegraphics[width=\textwidth]{supple/figure_s1.pdf}
    \caption{Random vectors with different copulas\citep{joe1997multivariate}. All these random vectors share the same marginal distribution ($U(0,1)$) and exhibit the same correlation of $-0.4$, except for the independent copula. The subfigures represent different copulas: (A) Independent, showing no correlation; (B) Normal, illustrating standard Gaussian dependency; (C) Frank; (D) Clayton (rotated 90°); (E) Gumbel (rotated 270°); (F) Joe (rotated 270°).}
    \label{fig:copula_example}
\end{figure}

\subsection{Comparing Gaussian, Frank, Clayton, Gumbel and Joe copulas with simulation}
\label{sec:copula_sim}
Table \ref{tbl:copula_test} shows the outcome of selecting a copula model 100 times under the assumption that the true underlying model is the Clayton copula. The findings highlight that models other than the true Clayton copula can be incorrectly chosen based on criteria such as Log-Likelihood (LogLik), Akaike Information Criterion (AIC), and Bayesian Information Criterion (BIC).

\begin{table}[ht]
\centering
\caption{The number of selected copula based on LogLik, AIC and BIC when generating random variables using the Clayton copula with mean and standard deviation.}
\label{tbl:copula_test}
\resizebox{\textwidth}{!}{%
\begin{tabular}{cccccccccc}
\hline
         & \multicolumn{3}{c}{LogLik}     & \multicolumn{3}{c}{AIC}          & \multicolumn{3}{c}{BIC}          \\ \cline{2-10} 
Family   & \# selected & mean     & std.   & \# selected & mean      & std.    & \# selected & mean      & std.    \\ \hline
Gaussian & 0          & 1615.157 & 57.301 & 0          & -3228.315 & 114.602 & 0          & -3221.328 & 114.602 \\
Frank    & 0          & 1524.409 & 53.864 & 0          & -3046.818 & 107.729 & 0          & -3039.830 & 107.729 \\
Clayton  & 100        & 2194.153 & 71.345 & 100        & -4386.307 & 142.690 & 100        & -4379.320 & 142.690 \\
Gumbel   & 0          & 1090.186 & 50.598 & 0          & -2178.372 & 101.197 & 0          & -2171.385 & 101.197 \\
Joe      & 0          & 501.012  & 35.078 & 0          & -1000.024 & 70.156  & 0          & -993.037  & 70.156  \\ \hline
\end{tabular}%
}
\end{table}
















\section{More Figures of Real Data Analysis}


\subsection{Obtained $p_i^H$ and $p_i^S$ of Real Data}
Based on the marginal $p$-values $p_{1i}$ and $p_{2i}$, 
  we  compute $p_i^H$ and $p_i^S$. 
  Their histograms are presented in in Figure \ref{fig:hist_pval} showing typical cases of 
  $p$-values 
  for sparse cases.

\begin{figure}[ht]
    \centering
    \includegraphics[width =\textwidth]{supple/hist_pval.pdf}
    \caption{Histograms of (A) $p_i^H(0.987)$ and (B) $p_i^S$.}
    \label{fig:hist_pval}
\end{figure}

\subsection{Venn Diagram of the numbers of significant genes}
Figure \ref{fig:venn} summarizes the numbers of significant genes which consists of the numbers rejected by two different one-stage procedures and two proposed two-stage procedures. Notably, most of genes that are significant at two-stage FDR(H) are also significant at two-stage FDR(S), although the results are not entirely consistent. 

\begin{figure}[ht]
    \centering
    \includegraphics[width=0.8\linewidth]{supple/venn.png}
    \caption{The plot represents a Venn diagram comparing the rejected hypotheses from one-stage FDR(locfdr), one-stage FDR(Storey), two-stage FDR(H) and two-stage FDR(S).}
    \label{fig:venn}
\end{figure}

\section{Comparison with DESeq2}
\label{sec:DESeq2}

In the previous section, our method utilizes a copula model to analyze the relationship between logfold changes and their standard deviations in RNA sequencing data. This differentiates our approach from traditional methods like EdgeR in \cite{robinson2010edger} and DESeq2 in \cite{love2014moderated}, which adjust logfold changes and their standard deviations separately without directly considering their relationship.

DESeq2 in \cite{love2014moderated}, for instance, processes RNA sequencing data in a systematic manner. It normalizes the data to make it comparable across genes and samples, estimates dispersion through a technique called shrinkage, and conducts hypothesis testing to identify genes with significant expression differences under various conditions, such as knockout (KO) versus wild type (WT). A critical part of DESeq2’s methodology involves recalculating the logfold changes and their standard deviations based on updated dispersion parameters.

For clarity, we explore how DESeq2 calculates these dispersion parameters. Initially, DESeq2 normalizes the raw data, setting the stage for further analysis. It then models the gene expression counts using a negative binomial (NB) distribution:

\begin{align}
    KO_{ij} \text{ or } WT_{ij} &\sim NB(\mu_{ij}, a_{i}), \\
    \log(\mu_{ij}) &= 
    \begin{cases}
        x_j + \beta_i &\text{if }KO\\
        x_j &\text{if }WT
    \end{cases}
    \label{eqn:deseq2}
\end{align}
where $\mu_{ij}$ represents the expected mean expression of gene $i$ in sample $j$, and $a_{i}$ is the dispersion parameter for gene $i$. The $\beta_i$ replaces the logfold change for gene $i$. Unlike directly estimating $a_i$ by maximum likelihood estimation, DESeq2 models the logarithm of the dispersion parameter, $\log(a_{i})$, as being normally distributed:

\begin{equation}
    \log(a_{i}) \sim N\left(\frac{b_1}{\bar{\mu}_{i}} + b_2, \sigma_d \right) \label{eqn:ai}
\end{equation}
where $\bar{\mu}_{i}$ is the average of normalized counts for gene $i$, and $b_1$, $b_2$ are parameters estimated via maximum likelihood estimation. Subsequently, DESeq2 re-estimates $a_i$ 
using the maximum a posteriori (MAP) estimate of $a_i$. Utilizing this re-estimated $a_i$, DESeq2 fits the model in \eqref{eqn:deseq2} and conducts tests on $\beta_i$.

To compare our approach directly with DESeq2, we calculated our auxiliary variable, the MAP of $a_i$, as in DESeq2.
Then we employed a Gaussian copula with a parameter of $\tau = -0.4$. This approach led to the rejection of 504 genes using the two-stage(H) and 588 genes using the two-stage(S) at a significance level of $\alpha = 0.10$. Notably, among the 37 genes rejected by DESeq2, 28 genes were also identified by our two-stage(S), illustrating a significant overlap and highlighting the effectiveness of incorporating the standard deviation of logfold changes into our analysis. 
Compared to our proposed methods, DESeq2 suggested by \citet{love2014moderated} has two distinctive features: Firstly, they do not solely depend on raw logfold change values to reject hypotheses. Our method directly estimates the standard deviation of logfold changes by bootstrap, and DESeq2 estimates with multi-step approach. Secondly, our approach uses copulas to model the relationship between logfold changes and their standard deviation, contrasting with DESeq2's implementation of Negative Binomial regression. Despite DESeq2 not explicitly outlining a rejection region based on dispersion, Figure \ref{fig:ds2}(A) shows how gene rejection varies with the re-estimated dispersion parameter, indicating an implicit dependency.

The Loglik, AIC and BIC for the candidate copulas are summarized in Table \ref{sup_tbl:copula_choice}, and the rejections of DESeq2, Two-stage (H), and Two-stage (S) with the same dispersion parameter with DESeq2 are illustrated in Figure \ref{fig:ds2}.

\begin{figure}[ht]
    \centering
    \includegraphics[width=0.85\linewidth]{supple/figure_s4.pdf}
    \caption{Comparative Analysis of Gene Rejection Methods. Left column: log-scaled standard deviation vs. logfold changes. Right column: auxiliary p-value (from DESeq2) vs. primary p-value (from the main text). Panels: (A) DESeq2, (B) Two-stage (H), (C) Two-stage (S). Red dots indicate rejected genes for each method.}
    \label{fig:ds2}
\end{figure}

\begin{table}[ht]
\centering
\caption{The Log-likelihood, AIC, BIC of Gaussian, Clayton, Gumbel and Joe copulas with set4$\Delta$ dataset with estimated standard deviations of logfold changes as in DESeq2}
\begin{tabular}{@{}cccc@{}}
\toprule
Family   & LogLik          & AIC               & BIC               \\ \midrule
Gaussian & \textbf{1578.79}         & \textbf{-3155.57}          & \textbf{-3148.56}          \\
Frank    & 1486.99         & -2971.97          & -2964.96          \\
Clayton  & 836.13 & -1670.25 & -1663.24 \\
Gumbel   & 1199.29         & -2396.58          & -2389.56          \\
Joe      & 486.19          & -970.38           & -963.37           \\ \bottomrule
\end{tabular}
\end{table}

\bibliographystyle{plainnat}  
\bibliography{reference}